\let\newfloat\newfloat@ltx
\providecommand{\algorithmname}{Algorithm}
\numberwithin{equation}{section}
\theoremstyle{plain}
\newtheorem{fact}{\protect\factname}[section]
\theoremstyle{plain}
\newtheorem{thm}{\protect\theoremname}[section]
\theoremstyle{remark}
\newtheorem{claim}{\protect\claimname}[section]
\date{}
\providecommand{\claimname}{Claim}
\providecommand{\factname}{Fact}
\providecommand{\theoremname}{Theorem}
\newcommand{\pick}{\stackrel{\$}{\leftarrow}}
\DeclareMathOperator*{\E}{\mathbb{E}}
\begin{document}
\title{Forging quantum data: classically defeating an $\mathsf{IQP}$-based
quantum test}
\author{Gregory D. Kahanamoku-Meyer}
\affiliation{Department of Physics, University of California at Berkeley, Berkeley, CA 94720}
\email{gkm@berkeley.edu}
\maketitle

\begin{abstract}
Recently, quantum computing experiments have for the first time exceeded the capability of classical computers to perform certain computations---a milestone termed ``quantum computational advantage.''
However, verifying the output of the quantum device in these experiments required extremely large classical computations.
An exciting next step for demonstrating quantum capability would be to implement tests of quantum computational advantage with efficient classical verification, such that larger system sizes can be tested and verified.
One of the first proposals for an efficiently-verifiable test of quantumness consists of hiding a secret classical bitstring inside a circuit of the class IQP, in such a way that samples from the circuit's output distribution are correlated with the secret.
The classical hardness of this protocol has been supported by evidence that directly simulating IQP circuits is hard, but the security of the protocol against other (non-simulating) classical attacks has remained an open question.
In this work we demonstrate that the protocol is not secure against classical forgery.
We describe a classical algorithm that can not only convince the verifier that the (classical) prover is quantum, but can in fact can extract the secret key underlying a given protocol instance.
Furthermore, we show that the key extraction algorithm is efficient in practice for problem sizes of hundreds of qubits.
Finally, we provide an implementation of the algorithm, and give the secret vector underlying the ``\$25 challenge'' posted online by the authors of the original paper.
\end{abstract}

\section{Introduction}

Recent experiments have demonstrated groundbreaking quantum computational power in the laboratory, showing \emph{quantum computational advantage}~\cite{arute_quantum_2019, zhong_quantum_2020, wu2021strong, zhu2021quantum}.
In the past decade, much theoretical work has gone into designing experimental protocols expressly for this purpose, and providing evidence for the classical hardness of reproducing the experimental results~\cite{aaronson_computational_2011, farhi2016quantum, lund2017quantum, harrow2017quantum, boixo2018characterizing, terhal2018quantum, neill_blueprint_2018, bravyi_quantum_2018, bravyi_quantum_2019, bouland_complexity_2019, aaronson2019classical, brakerski_cryptographic_2019, brakerski2020simpler, kahanamoku2021classically}.
A difficulty with many of them, however, is that the quantum machine's output is hard to \emph{verify}.
In many cases, the best known algorithm for directly checking the solution is equivalent to classically performing the computational task itself.
This presents challenges for validation of the test's results, because an ideal demonstration of quantum advantage occurs in the regime where a classical solution is not just difficult, but impossible with current technology.
In that regime, experiments have had to resort to indirect methods to demonstrate that their devices are producing correct results~\cite{arute_quantum_2019, zhong_quantum_2020, wu2021strong, zhu2021quantum}.

\begin{figure}
\begin{raggedright}
\includegraphics[width=0.9\columnwidth]{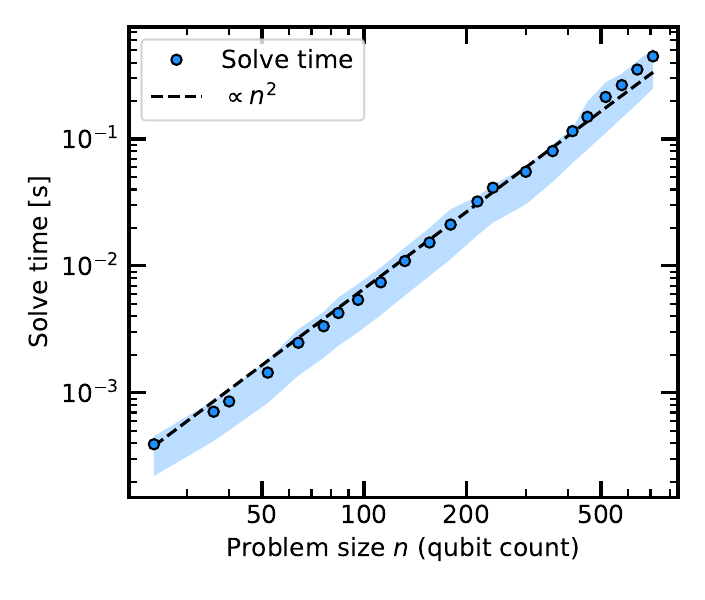}
\par\end{raggedright}
\caption{\label{fig:solvetime}Mean time to extract the secret vector $\boldsymbol{s}$
from $X$-programs constructed as described in \cite{shepherd_temporally_2009}.
Shaded region is the first to third quartile of the distribution of
runtimes. We observe that the time is polynomial and fast in practice
even up to problem sizes of hundreds of qubits. See Section \ref{subsec:Implementation}
for a discussion of the $\mathcal{O}\left(n^{2}\right)$ scaling.
The data points were computed by applying the algorithm to 1000 unique
$X$-programs at each problem size. The secret vector was successfully
extracted for every $X$-program tested. Experiments were completed
using one thread on an Intel 8268 ``Cascade Lake''
processor.}
\end{figure}

In 2009, an efficiently-verifiable test of quantum computational advantage was proposed based on ``instantaneous quantum polynomial-time'' (IQP) circuits---quantum circuits in which all operations commute~\cite{shepherd_temporally_2009}.
The protocol places only moderate requirements on the quantum device, making it potentially a good candidate for near-term hardware.
Furthermore, later papers showed based on reasonable assumptions that classically sampling from the resulting distribution should be hard \cite{bremner_classical_2011, bremner_average-case_2016}.
This suggests that a ``black-box'' approach to cheating classically (by simply simulating the quantum device) is indeed computationally hard, and only a couple hundred qubits would be required to make a classical solution intractable.

Importantly, however, the classical verifier of the efficiently-verifiable protocol does not explicitly check whether the prover's samples come from the correct distribution (in fact, doing such a check efficiently is probably not possible \cite{bremner_classical_2011}).
Instead, the sampling task is designed such that bitstrings from its distribution will be orthogonal to some secret binary vector $\boldsymbol{s}$ with high probability, and it is this property that is checked by the verifier.
A question that has remained open is whether a classical machine can efficiently generate samples satisfying the orthogonality check, without necessarily approximating the actual circuit's distribution.
In this work we show that the answer to this question is yes.
We give an explicit algorithm that can extract the secret bistring $\boldsymbol{s}$ underlying any instance of the protocol, thus making it trivial to generate orthogonal samples that pass the verifier's test.
The main results described here are a statement of the algorithm, a proof that a single iteration of it will extract the secret vector $\boldsymbol{s}$ with probability $\nicefrac{1}{2}$ (which can be made arbitrarily close to 1 by repetition), and empirical results demonstrating that the algorithm is efficient in practice (summarized in Figure \ref{fig:solvetime}). 

The following is a summary of the paper's structure. In Section \ref{sec:Background}, we review the protocol's construction and some relevant analysis from the original paper.
In Section \ref{sec:Algorithm} we describe the algorithm to extract the secret key, and therefore break the protocol's security against classical provers.
There we also discuss briefly our implementation of the algorithm.
In Section \ref{sec:Discussion} we discuss the results, and provide the secret key underlying the ``\$25 challenge'' that accompanied the publication of the protocol.

\section{Background}
\label{sec:Background}

\paragraph{Overview of protocol}

Here we summarize the IQP-based protocol for quantum advantage, in the standard cryptographic terms of an ostensibly quantum \emph{prover} attempting to prove its quantum capability to a classical \emph{verifier}.
We refer the reader to the work that proposed the protocol for any details not covered here~\cite{shepherd_temporally_2009}.
The core of the protocol is a sampling problem.
The verifier generates a Hamiltonian $H_{P}$ consisting of a sum of products of Pauli $X$ operators, and asks the quantum prover to generate samples by measuring the state $e^{iH_{P}\theta}\left|0^{\otimes n}\right\rangle$ for some value of the ``action'' $\theta$.
The Hamiltonian $H_P$ is designed such that the measured bitstrings $\left\{ \boldsymbol{x}_{i}\right\} $ are biased with respect to a secret binary vector $\boldsymbol{s}$, so that $\boldsymbol{x}_{i}\cdot\boldsymbol{s}=0$ with high probability (where $(\cdot)$ represents the binary inner product, modulo 2).
The classical verifier, with knowledge of $\boldsymbol{s}$, can quickly check that the samples have such a bias. 
Since $\boldsymbol{s}$ should be only known to the verifier, it was conjectured
that the only efficient way to generate such samples is by actually
computing and measuring the quantum state~\cite{shepherd_temporally_2009}.
However, in Section \ref{sec:Algorithm} we show that it is possible to extract $\boldsymbol{s}$ classically from just the description of the Hamiltonian.

\paragraph{X-programs}

A Hamiltonian of the type used in this protocol can be described by a rectangular matrix of binary values, for which each row corresponds to a term of the Hamiltonian.
Given such a binary matrix $P$ (called an ``$X$-program''), the Hamiltonian is
\begin{equation}
H_{P}=\sum_{i}\prod_{j}X^{P_{ij}}\label{eq:hamiltonian}
\end{equation}
In words, a 1 in $P$ at row $i$ and column $j$ corresponds to the inclusion of a Pauli $X$ operator on the $j^{\mathrm{th}}$ site in the $i^{\mathrm{th}}$ term of the Hamiltonian. The $X$-program also has one additional parameter $\theta$, which is the ``action''---the integrated energy over time for which the Hamiltonian will be applied.
For the protocol relevant to this work, the action is set to $\theta = \pi/8$ (see below).

\paragraph{Embedding a bias and verifying the output}

In order to bias the output distribution along $\boldsymbol{s}$, a submatrix with special properties is embedded within the matrix $P$.
Notationally, for a vector $\boldsymbol{s}$ and matrix $P$, let the submatrix $P_{\boldsymbol{s}}$ be that which is generated by deleting all rows of $P$ that are orthogonal to $\boldsymbol{s}$.
Letting $\boldsymbol{X}$ represent the distribution of measurement results for a given $X$-program, it can be shown that the probability that a measurement outcome is orthogonal to the vector $\boldsymbol{s}$, $\Pr[\boldsymbol{X}\cdot \boldsymbol{s} = 0]$, depends only on the submatrix $P_{\boldsymbol{s}}$.
The rows of $P$ that are orthogonal to $\boldsymbol{s}$ are irrelevant.
The protocol uses that fact to attempt to hide $P_{\boldsymbol{s}}$ (and thus $\boldsymbol{s}$): starting with a matrix $P_{\boldsymbol{s}}$ that produces a bias, we may attempt to hide it in a larger matrix $P$ by appending rows that are random aside from having $\boldsymbol{p} \cdot \boldsymbol{s}=0$, and then scrambling the new, larger matrix in a way that preserves the bias.

But what matrix $P_{\boldsymbol{s}}$ should one start with?
In the protocol, the verifier sets $P_{\boldsymbol{s}}$ to the generator matrix for a binary code of block length $q \equiv 7 \pmod 8$ whose codewords $\boldsymbol{c}$ have $\mathrm{wt}(\boldsymbol{c}) \in \{-1, 0\} \pmod 4$ (and both those weights are represented, that is, the codewords do not all have weight $0 \pmod 4$).
In \cite{shepherd_temporally_2009}, the authors suggest specifically using a binary quadratic residue (QR) code because it has the desired codeword weights.
The action $\theta$ is set to $\pi/8$.
As described in Facts~\ref{fact:quantum-strategy} and~\ref{fact:classical-strategy} below, this configuration leads to a gap between the quantum and classical probabilities of generating samples orthogonal to $\boldsymbol{s}$ (for the best known classical strategy before this work).
The verifier's check is then simply to request a large number of samples, and determine if the fraction orthogonal to $\boldsymbol{s}$ is too large to have likely been generated by any classical strategy.

In the two Facts below, we recall the probabilities corresponding to the quantum strategy and previously best-known classical strategy~\cite{shepherd_temporally_2009}.
The reasoning behind the classical strategy (Fact~\ref{fact:classical-strategy}) forms the setup for the new algorithm described in this paper; it is worth understanding its proof before moving on to the algorithm in Section \ref{sec:Algorithm}.

\begin{fact}
\label{fact:quantum-strategy}
\textbf{Quantum strategy} 

Let $P$ be an $X$-program constructed by embedding a submatrix with the properties described above.
Let $\boldsymbol{X}$ be a random variable representing the distribution of bitstrings from an $n$-qubit quantum state $e^{iH_{P}\pi/8}\left|0\right\rangle $ measured in the $Z$ basis, where $H_{P}$ is defined as in Equation~\ref{eq:hamiltonian}. Then,
\begin{equation}
\Pr\left[\boldsymbol{X}\cdot\boldsymbol{s}=0\right]=\cos^{2}\left(\frac{\pi}{8}\right)\approx0.85\cdots
\label{eq:quantum-strat}
\end{equation}
\end{fact}

\begin{proof}
The entire proof is contained in \cite{shepherd_temporally_2009}.
To summarize, it is shown that for any string $\boldsymbol{z}$ and corresponding submatrix $P_{\boldsymbol{z}}$, the probability is
\begin{equation}
\Pr\left[\boldsymbol{X}\cdot\boldsymbol{z}=0\right] = \E_{\boldsymbol{c}} \left[ \cos^{2}\left( \theta \cdot (q - 2\mathrm{wt}(\boldsymbol{c}) \right) \right]
\end{equation}
where $\theta$ is the action, $q$ is the number of rows in $P_{\boldsymbol{z}}$ and the expectation is taken over the codewords $\boldsymbol{c}$ of the code generated by the submatrix $P_{\boldsymbol{z}}$.
When the values of $\theta = \pi/8$, $q \equiv 7 \pmod 8$ and $\mathrm{wt}(\boldsymbol{c}) \in \{-1, 0\} \pmod 4$ corresponding to the specific submatrix $P_{\boldsymbol{s}}$ are substituted into this expression, the result is Equation~\ref{eq:quantum-strat}.

\end{proof}

\begin{fact}
\label{fact:classical-strategy}\textbf{Classical strategy of \cite{shepherd_temporally_2009}} 

Again let $P$ be an $X$-program constructed by embedding a submatrix with the properties described above.
Let $\boldsymbol{d},\boldsymbol{e}$ be two bitstrings of length $n$ (the length of a row of $P$). 
Define $P_{\boldsymbol{d},\boldsymbol{e}}$ as the matrix generated by deleting the rows of $P$ orthogonal to $\boldsymbol{d}$ or $\boldsymbol{e}$.
\footnote{In \cite{shepherd_temporally_2009}, $P_{\boldsymbol{d},\boldsymbol{e}}$ is written as $P_{\boldsymbol{d}}\cap P_{\boldsymbol{e}}$.}
Let $\boldsymbol{y}=\sum_{\boldsymbol{p}_{i}\in P_{\boldsymbol{d},\boldsymbol{e}}}\boldsymbol{p}_{i}$ be the vector sum of the rows of $P_{\boldsymbol{d},\boldsymbol{e}}$.
Letting $\boldsymbol{Y}$ be the random variable representing the distribution of $\boldsymbol{y}$ when $\boldsymbol{d}$ and $\boldsymbol{e}$ are chosen uniformly at random, then
\begin{equation}
\Pr\left[\boldsymbol{Y}\cdot\boldsymbol{s}=0\right]=\nicefrac{3}{4}
\end{equation}
\end{fact}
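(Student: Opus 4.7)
My plan is to reduce $\boldsymbol{y}\cdot\boldsymbol{s}$ to a bilinear form in $\boldsymbol{d}$ and $\boldsymbol{e}$ supported on the secret submatrix $P_{\boldsymbol{s}}$, then compute the probability by conditioning.

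First, observe that any row of $P$ orthogonal to $\boldsymbol{s}$ contributes $0$ to $\boldsymbol{y}\cdot\boldsymbol{s}$, while every row of $P_{\boldsymbol{s}}$ satisfies $\boldsymbol{p}_{i}\cdot\boldsymbol{s}=1$ by construction. A row $\boldsymbol{p}_{i}\in P_{\boldsymbol{s}}$ is retained in $P_{\boldsymbol{d},\boldsymbol{e}}$ exactly when $(\boldsymbol{p}_{i}\cdot\boldsymbol{d})(\boldsymbol{p}_{i}\cdot\boldsymbol{e})=1$, so
\[
\boldsymbol{y}\cdot\boldsymbol{s}\ \equiv\ \sum_{i\in P_{\boldsymbol{s}}}(\boldsymbol{p}_{i}\cdot\boldsymbol{d})(\boldsymbol{p}_{i}\cdot\boldsymbol{e})\ =\ (Q\boldsymbol{d})\cdot(Q\boldsymbol{e})\pmod{2},
\]
where $Q:=P_{\boldsymbol{s}}$ is viewed as a linear map on $\mathbb{F}_{2}^{n}$.

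Next I would decouple $\boldsymbol{d}$ and $\boldsymbol{e}$ by conditioning on $\boldsymbol{d}$. Rewriting $(Q\boldsymbol{d})\cdot(Q\boldsymbol{e})=\boldsymbol{e}^{T}(Q^{T}Q\boldsymbol{d})$, the right-hand side is a linear functional of the uniform random $\boldsymbol{e}\in\mathbb{F}_{2}^{n}$: it vanishes identically when $Q^{T}Q\boldsymbol{d}=0$, and otherwise equals a uniformly random bit. Averaging,
\[
\Pr[\boldsymbol{y}\cdot\boldsymbol{s}=0]\ =\ \tfrac{1}{2}\ +\ \tfrac{1}{2}\,\Pr_{\boldsymbol{d}}\!\left[Q^{T}Q\boldsymbol{d}=0\right].
\]

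The target value $3/4$ is therefore equivalent to $\Pr_{\boldsymbol{d}}[Q^{T}Q\boldsymbol{d}=0]=1/2$, i.e.\ the rank of $Q^{T}Q$ over $\mathbb{F}_{2}$ equals $1$. This last step is where I expect the main difficulty: the identity is not a generic consequence of $Q$ being a generator matrix, but a specific algebraic property of the QR-code construction recommended in \cite{shepherd_temporally_2009}. I would try to establish it from the self-orthogonality and transitive cyclic symmetry of QR codes, which together should force the column Gram matrix of $Q$ to collapse to the required rank-one form; failing a clean symmetry argument, I would fall back on direct verification for the concrete QR codes used there.
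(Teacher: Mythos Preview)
Your reduction is exactly the paper's: it also rewrites $\boldsymbol{y}\cdot\boldsymbol{s}=(P_{\boldsymbol{s}}\boldsymbol{d})\cdot(P_{\boldsymbol{s}}\boldsymbol{e})=\boldsymbol{c}_{\boldsymbol{d}}\cdot\boldsymbol{c}_{\boldsymbol{e}}$ and then argues about when this inner product vanishes. Your conditioning step and the resulting identity $\Pr=\tfrac12+\tfrac12\Pr[Q^{T}Q\boldsymbol{d}=0]$ are a correct (and slightly more explicit) repackaging of the same computation.

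The only gap is the final step you flag as uncertain, and it has a one-line finish that you nearly stated. The \emph{extended} quadratic residue code, obtained by appending a parity bit, is self-dual; hence for any two codewords $\boldsymbol{c},\boldsymbol{c}'\in C$ one has $(\boldsymbol{c},\mathrm{par}(\boldsymbol{c}))\cdot(\boldsymbol{c}',\mathrm{par}(\boldsymbol{c}'))=0$, i.e.
\[
\boldsymbol{c}\cdot\boldsymbol{c}'=\mathrm{par}(\boldsymbol{c})\,\mathrm{par}(\boldsymbol{c}') .
\]
Substituting $\boldsymbol{c}=Q\boldsymbol{d}$, $\boldsymbol{c}'=Q\boldsymbol{e}$ gives $\boldsymbol{d}^{T}Q^{T}Q\,\boldsymbol{e}=(\boldsymbol{v}^{T}\boldsymbol{d})(\boldsymbol{v}^{T}\boldsymbol{e})$ with $\boldsymbol{v}=Q^{T}\boldsymbol{1}$, so $Q^{T}Q=\boldsymbol{v}\boldsymbol{v}^{T}$ has rank~$1$ (and $\boldsymbol{v}\neq 0$ because exactly half the codewords have even parity). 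No cyclic-symmetry argument or case-by-case verification is needed; self-duality of the extended code alone suffices, and this is precisely the property the paper invokes.
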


\begin{proof}
(From~\cite{shepherd_temporally_2009})
With $\boldsymbol{y}$ defined as above, we have
\begin{equation}
\boldsymbol{y}\cdot\boldsymbol{s}=\sum_{\boldsymbol{p}_{i}\in P_{\boldsymbol{d},\boldsymbol{e}}}\boldsymbol{p}_{i}\cdot\boldsymbol{s}
\end{equation}

By defintion, $\boldsymbol{p}_{i}\cdot\boldsymbol{s}=1$ if $\boldsymbol{p}_{i} \in P_{\boldsymbol{s}}$.
Therefore $\boldsymbol{y}\cdot\boldsymbol{s}$ is equivalent to simply counting the number of rows common to $P_{\boldsymbol{s}}$ and $P_{\boldsymbol{d},\boldsymbol{e}}$, or equivalently, counting the rows in $P_{\boldsymbol{s}}$ for which $\boldsymbol{p}\cdot\boldsymbol{d}$ and $\boldsymbol{p}\cdot\boldsymbol{e}$ are both 1.
We can express this using the matrix-vector products of $P_{\boldsymbol{s}}$ with $\boldsymbol{d}$ and $\boldsymbol{e}$:

\begin{align}
\boldsymbol{y}\cdot\boldsymbol{s} & =\sum_{\boldsymbol{p}_{i}\in P_{\boldsymbol{s}}}\left(\boldsymbol{p}\cdot\boldsymbol{d}\right)\left(\boldsymbol{p}\cdot\boldsymbol{e}\right)\\
 & =\left(P_{\boldsymbol{s}}\ \boldsymbol{d}\right)\cdot\left(P_{\boldsymbol{s}}\ \boldsymbol{e}\right)
\end{align}

Considering that $P_{\boldsymbol{s}}$ is the generator matrix for an error correcting code, denote $\boldsymbol{c}_{\boldsymbol{d}}=P_{\boldsymbol{s}}\ \boldsymbol{d}$ as the encoding of $\boldsymbol{d}$ under $P_{\boldsymbol{s}}$.
Then we have
\begin{align}
\boldsymbol{y}\cdot\boldsymbol{s} & =\boldsymbol{c}_{\boldsymbol{d}}\cdot\boldsymbol{c}_{\boldsymbol{e}}
\end{align}

Now, note that if a code has $\mathrm{wt}(\boldsymbol{c}) \in \{-1, 0\} \pmod 4$ for all codewords $\boldsymbol{c}$, the extended version of that code (created by adding a single parity bit) is doubly even, that is, has all codeword weights exactly $0 \pmod 4$.
A doubly even binary code is necessarily self-dual, meaning all its codewords are orthogonal.
This implies that any two codewords $\boldsymbol{c}_{\boldsymbol{d}}$ and $\boldsymbol{c}_{\boldsymbol{e}}$ of the original (non-extended) code have $\boldsymbol{c}_{\boldsymbol{d}}\cdot\boldsymbol{c}_{\boldsymbol{e}}=0$ iff either $\boldsymbol{c}_{\boldsymbol{d}}$ or $\boldsymbol{c}_{\boldsymbol{e}}$ has even parity.
Half of our code's words have even parity and $\boldsymbol{c}_{\boldsymbol{d}}$ and $\boldsymbol{c}_{\boldsymbol{e}}$ are random codewords, so the probability that either of them has even parity is $\nicefrac{3}{4}$.
Thus, the probability that $\boldsymbol{y}\cdot\boldsymbol{s}=0$ is $\nicefrac{3}{4}$, proving the fact.
\end{proof}

In the next section, we show that the classical strategy just described can be improved.

\section{Algorithm\label{sec:Algorithm}}

The classical strategy described in Fact~\ref{fact:classical-strategy} above generates vectors that are orthogonal to $\boldsymbol{s}$ with probability $\nicefrac{3}{4}$.
The key to classically defeating the protocol is that it is possible to \emph{correlate} the vectors generated by that strategy, such that there is a non-negligible probability of generating a large set of vectors that \emph{all} are orthogonal to $\boldsymbol{s}$.
These vectors form a system of linear equations that can be solved to yield $\boldsymbol{s}$. 
Finally, with knowledge of $\boldsymbol{s}$ it is trivial to generate samples that pass the verifier's test. 

We follow a modified version of the classical strategy of Fact~\ref{fact:classical-strategy} to generate each vector in the correlated set. 
Crucially, instead of choosing random bitstrings for both $\boldsymbol{d}$ and $\boldsymbol{e}$ each time, we generate a single random bitstring $\boldsymbol{d}$ and hold it constant, only choosing new random values for $\boldsymbol{e}$ with each iteration.
If the encoding $\boldsymbol{c}_{\boldsymbol{d}}$ of $\boldsymbol{d}$ under $P_{\boldsymbol{s}}$ has even parity, \emph{all} of the generated vectors $\boldsymbol{m}_{i}$ will have $\boldsymbol{m}_{i}\cdot\boldsymbol{s}=0$ (see Theorem~\ref{thm:correctprob} below).
This occurs with probability $\nicefrac{1}{2}$ over our choice of $\boldsymbol{d}$.

In practice, it is more convenient to do the linear solve if all $\boldsymbol{m}_{i}\cdot\boldsymbol{s}=1$ instead of 0.
This can be easily accomplished by adding to each $\boldsymbol{m}_{i}$ a vector $\boldsymbol{m}^{*}$ with $\boldsymbol{m}^{*}\cdot\boldsymbol{s}=1$.
It turns out that $\boldsymbol{m}^{*}=\sum_{\boldsymbol{p}\in\mathrm{rows}\left(P\right)}\boldsymbol{p}$ has this property; see proof of Theorem \ref{thm:correctprob}.

The explicit algorithm for extracting the vector $\boldsymbol{s}$ is given in Algorithm \ref{alg:extract}.

\begin{algorithm}
\begin{enumerate}
\item Let $\boldsymbol{m}^{*}=\sum_{\substack{\boldsymbol{p}\in\mathrm{rows}\left(P\right)}
}\boldsymbol{p}$.
\item Pick $\boldsymbol{d}\pick\left\{ 0,1\right\} ^{n}$.
\item Generate a large number (say $2n$) of vectors $\boldsymbol{m}_{i}$ via the following steps, collecting the results into the rows of a matrix $M$.

\begin{enumerate}
\item Pick $\boldsymbol{e}\pick\left\{ 0,1\right\} ^{n}$
\item Let $\boldsymbol{m}_{i}=\boldsymbol{m}^{*}+\sum_{\substack{\boldsymbol{p}\in\mathrm{rows}\left(P\right)\\
\boldsymbol{p}\cdot\boldsymbol{d}=\boldsymbol{p}\cdot\boldsymbol{e}=1
}
}\boldsymbol{p}$
\end{enumerate}

\item Via linear solve, find the set of vectors $\left\{ \boldsymbol{s}_{i}\right\} $ satisfying $M\boldsymbol{s}_{i}=\boldsymbol{1}$, where $\boldsymbol{1}$ is the vector of all ones.
\item For each candidate vector $\boldsymbol{s}_{i}$:
\begin{enumerate}
\item Extract $P_{\boldsymbol{s}_{i}}$ from $P$ by deleting the rows of $P$ orthogonal to $\boldsymbol{s}_{i}$
\item If adding a parity bit to each of the columns $\boldsymbol{c}$ of $P_{\boldsymbol{s}_{i}}$ yields the generator matrix for a code that is doubly even (all basis codewords are doubly even and mutually orthogonal), return $\boldsymbol{s}$ and exit.
\end{enumerate}

\item No candidate vector $\boldsymbol{s}$ was found; return to step 2.
\end{enumerate}

\caption{\label{alg:extract}\textsc{ExtractKey}$\left(P\right)$\protect \\
The algorithm to extract the secret vector $\boldsymbol{s}$ from an $X$-program $P$. $n$~is the number of columns in the $X$-program, and $\protect\pick$ means ``select uniformly from the set.''}

\end{algorithm}

\subsection{Analysis\label{subsec:Analysis}}

In this section we present a theorem and an empirical claim which demonstrate together that Algorithm~\ref{alg:extract} can be used to efficiently extract the key from any $X$-program constructed according to the protocol described in Section~\ref{sec:Background}.
The theorem shows that with probability~$1/2$ a single iteration of the algorithm finds the vector $\boldsymbol{s}$.
The empirical claim is that Algorithm~\ref{alg:extract} is efficient.

\begin{thm}
\label{thm:correctprob}On input an $X$-program $P$ containing a unique submatrix $P_{\boldsymbol{s}}$ with the properties described in Section~\ref{sec:Background}, a single iteration of Algorithm~\ref{alg:extract} will output the vector $\boldsymbol{s}$ corresponding to $P_{\boldsymbol{s}}$ with probability $\frac{1}{2}$.
\end{thm}

\begin{proof}
If $\boldsymbol{s}$ is contained in the set $\left\{ \boldsymbol{s}_{i}\right\} $ generated in step 4 of the algorithm, the correct vector $\boldsymbol{s}$ will be output via the check in step 5 because there is a unique submatrix $P_{\boldsymbol{s}}$ with codewords having $\mathrm{wt}(\boldsymbol{c}) \in \{-1, 0\} \pmod 4$.
$\boldsymbol{s}$ will be contained in $\left\{ \boldsymbol{s}_{i}\right\} $ as long as $M$ satisfies the equation $M\boldsymbol{s}=\boldsymbol{1}$.
Thus the proof reduces to showing that $M\boldsymbol{s}=\boldsymbol{1}$ with probability $\nicefrac{1}{2}$. 

Each row of $M$ is 
\begin{equation}
\boldsymbol{m}_{i}=\boldsymbol{m}^{*}+\bar{\boldsymbol{m}}_{i}
\end{equation}
for a vector $\bar{\boldsymbol{m}}_{i}$ defined as
\begin{equation}
\bar{\boldsymbol{m}}_{i}=\sum_{\substack{\boldsymbol{p}\in\mathrm{rows}\left(P\right)\\
\boldsymbol{p}\cdot\boldsymbol{d}=\boldsymbol{p}\cdot\boldsymbol{e}=1
}
}\boldsymbol{p}
\end{equation}
Here we will show that $\boldsymbol{m}^{*}\cdot\boldsymbol{s}=1$ always and $\bar{\boldsymbol{m}}_{i}\cdot\boldsymbol{s}=0$ for all $i$ with probability $\nicefrac{1}{2}$, implying that $M\boldsymbol{s}=\boldsymbol{1}$ with probability $\nicefrac{1}{2}$.

First we show that $\boldsymbol{m}^{*}\cdot\boldsymbol{s}=1$. $\boldsymbol{m}^{*}$ is the sum of all rows of $P$, so we have
\begin{equation}
\boldsymbol{m}^{*}\cdot\boldsymbol{s}=\sum_{\substack{\boldsymbol{p}\in\mathrm{rows}\left(P\right)}
}\boldsymbol{p}\cdot\boldsymbol{s}=\sum_{\substack{\boldsymbol{p}\in\mathrm{rows}\left(P_{\boldsymbol{s}}\right)}
}1
\end{equation}
We see that the inner product is equal to the number of rows in the submatrix $P_{\boldsymbol{s}}$ (mod 2).
This submatrix is a generator matrix for a code of block size $7 \pmod 8$; thus the number of rows is odd and
\begin{equation}
\boldsymbol{m}^{*}\cdot\boldsymbol{s}=1
\end{equation}

Now we turn to showing that $\bar{\boldsymbol{m}}_{i}\cdot\boldsymbol{s}=0$ for all $i$ with probability $\nicefrac{1}{2}$. 
In the proof of Fact \ref{fact:classical-strategy}, it was shown that for any two vectors $\boldsymbol{d}$ and $\boldsymbol{e}$, vectors $\bar{\boldsymbol{m}}_{i}$ generated by summing rows $\boldsymbol{p}_{i}$ of $P$ for which $\boldsymbol{d}\cdot\boldsymbol{p}_{i}=\boldsymbol{e}\cdot\boldsymbol{p}_{i}=1$
have
\begin{equation}
\bar{\boldsymbol{m}}_{i}\cdot\boldsymbol{s}=0\text{ iff }\boldsymbol{c}_{\boldsymbol{d}}\text{ or }\boldsymbol{c}_{\boldsymbol{e}}\text{ has even parity}\label{eq:mbarparity}
\end{equation}
where $\boldsymbol{c}_{\boldsymbol{d}}$ and $\boldsymbol{c}_{\boldsymbol{e}}$ are the encodings under $P_{\boldsymbol{s}}$ of $\boldsymbol{d}$ and $\boldsymbol{e}$ respectively.
If $\boldsymbol{d}$ is held constant for all $i$, and $\boldsymbol{d}$ happened to be chosen such that
$\boldsymbol{c}_{\boldsymbol{d}}=P_{\boldsymbol{s}}\ \boldsymbol{d}$ has even parity, then $\bar{\boldsymbol{m}}_{i}\cdot\boldsymbol{s}=0$ for all $i$ by Equation \ref{eq:mbarparity}.
Because half of the codewords have even parity, for $\boldsymbol{d}$ selected uniformly at random we have $\bar{\boldsymbol{m}}_{i}\cdot\boldsymbol{s}=0$ for all $i$ with probability $\nicefrac{1}{2}$.

We have shown that $\boldsymbol{m}^{*}\cdot\boldsymbol{s}=1$ always and $\bar{\boldsymbol{m}}_{i}\cdot\boldsymbol{s}=0$ for all $i$ with probability $\nicefrac{1}{2}$.
Therefore we have
\[
\Pr_{\boldsymbol{d}}\left[\boldsymbol{m}_{i}\cdot\boldsymbol{s}=1\ \forall\ i\right]=\nicefrac{1}{2}
\]
Thus $M\boldsymbol{s}=\boldsymbol{1}$ with probability $\nicefrac{1}{2}$.
The algorithm will output $\boldsymbol{s}$ whenever $M\boldsymbol{s}=\boldsymbol{1}$, proving the theorem.
\end{proof}
\begin{center}
\rule[0.5ex]{0.6\columnwidth}{1pt}
\par\end{center}

Before we move on, we remark that while Theorem~\ref{thm:correctprob} treats $X$-programs containing a single unique submatrix with the relevant properties, the algorithm can easily be modified to return the vectors $\boldsymbol{s}$ corresponding to \emph{all} such submatrices, if more exist, by simply accumulating all vectors $\boldsymbol{s}$ for which the check in Step 5(b) succeeds.
We do note, however, that for the protocol described in Section~\ref{sec:Background}, the probability of ``extraneous'' submatrices other than the one intentionally built into the matrix arising by chance is vanishingly small---corresponding to the probability that a random binary linear code happens to be doubly even and self-dual, which is bounded from above by $1/4^{n}$.

Now, having established that each iteration of the algorithm outputs $\boldsymbol{s}$ with probability 1/2, we now turn to analyzing its runtime.
\begin{claim}
\label{claim:runtime}(empirical) Algorithm \ref{alg:extract} halts in $\mathcal{O}\left(n^{3}\right)$ time on average.
\end{claim}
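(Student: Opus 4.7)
The plan is to bound the cost of Algorithm \ref{alg:extract} step by step, relying on two empirical observations about the instances produced by the construction of \cite{shepherd_temporally_2009} (which is why the claim is qualified as empirical): first, that the number $m$ of rows of $P$ scales as $\mathcal{O}(n)$; second, that the matrix $M$ assembled in step 3 has rank $n - \mathcal{O}(1)$ with high probability, so that the candidate set $\{\boldsymbol{s}_i\}$ produced by the linear solve contains only $\mathcal{O}(1)$ elements on average.

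Granting these, every step fits into the budget. Step 1 sums $m$ length-$n$ vectors in $\mathcal{O}(nm) = \mathcal{O}(n^2)$ time. Step 2 is $\mathcal{O}(n)$. Step 3 builds $2n$ rows of $M$; for each row one computes the inner products $\boldsymbol{p}\cdot\boldsymbol{d}$ and $\boldsymbol{p}\cdot\boldsymbol{e}$ for every $\boldsymbol{p}\in\mathrm{rows}(P)$ and accumulates the selected rows, at cost $\mathcal{O}(nm)$ per output row, for a total of $\mathcal{O}(n^2 m) = \mathcal{O}(n^3)$. Step 4 is a Gaussian elimination over $\mathbb{F}_2$ on a $2n\times n$ system, which runs in $\mathcal{O}(n^3)$. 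Step 5 iterates over the constant-sized candidate set, extracting $P_{\boldsymbol{s}_i}$ in $\mathcal{O}(nm)=\mathcal{O}(n^2)$ and running the codeword-weight test for the quadratic residue code in $\mathcal{O}(n^2)$ per candidate. Summing yields the $\mathcal{O}(n^3)$ bound. Because Theorem \ref{thm:correctprob} gives success probability $\nicefrac{1}{2}$ per draw of $\boldsymbol{d}$, wrapping the algorithm in an outer restart loop costs only $\mathcal{O}(1)$ iterations in expectation and does not change the asymptotic.

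The principal obstacle, and the reason I cannot hope for a fully rigorous argument here, is justifying the assumption on $\mathrm{rank}(M)$. The vectors $\bar{\boldsymbol{m}}_i$ generated in step 3(b) arise from summing correlated, data-dependent subsets of rows of the single fixed matrix $P$, so they are not independent uniform samples from $\mathbb{F}_2^n$ and one cannot directly invoke the standard counting argument that $2n$ uniform vectors span the ambient space except with exponentially small failure probability. My approach would therefore be to report the empirically measured rank of $M$ and size of the candidate set across the 1000 randomly generated $X$-programs already used to produce Figure \ref{fig:solvetime}, confirming that in every trial step 5 terminated in time consistent with the $\mathcal{O}(n^3)$ budget. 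A fully rigorous bound would require a structural understanding of the distribution of $\bar{\boldsymbol{m}}_i$ induced by the protocol's construction of $P$, which I leave to the experimental evidence that backs the claim.
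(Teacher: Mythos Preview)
Your proposal is correct and follows the same approach as the paper: a step-by-step cost accounting showing everything except the candidate enumeration fits in $\mathcal{O}(n^{3})$, followed by the observation that $|\{\boldsymbol{s}_i\}| = 2^{\,n-\mathrm{rank}(M)}$ cannot be bounded rigorously because the rows of $M$ are correlated through $\boldsymbol{d}$ and $\boldsymbol{e}$, so the claim must rest on empirical data about $\mathrm{rank}(M)$. You are somewhat more explicit than the paper (which simply says all steps but one are $\mathcal{O}(n^{3})$ ``by inspection'') in writing out the per-step costs and in naming the $m = \mathcal{O}(n)$ assumption on the row count of $P$, but the substance is identical.
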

All steps of the algorithm except for step 5 have $\mathcal{O}\left(n^{3}\right)$ scaling by inspection.
The obstacle preventing Claim \ref{claim:runtime} from trivially holding is that it is hard to make a rigorous statement about how large the set of candidate vectors $\left\{ \boldsymbol{s}_{i}\right\} $ is.
Because $\left|\left\{ \boldsymbol{s}_{i}\right\} \right|=2^{n-\mathrm{rank}\left(M\right)}$, we'd like to show that on average, the rank of $M$ is close to or equal to $n$.
It seems reasonable that this would be the case: we are generating the rows of $M$ by summing rows from $P$, and $P$ must have full rank because it contains a rank-$n$ error correcting code.
But the rows of $P$ summed into each $\boldsymbol{m}_{i}$ are not selected independently---they are always related via their connection to the vectors $\boldsymbol{d}$ and $\boldsymbol{e}$, and it's not clear how these correlations affect the linear independence of the resulting $\boldsymbol{m}_{i}$.

\begin{figure}
\begin{centering}
\includegraphics[width=0.8\columnwidth]{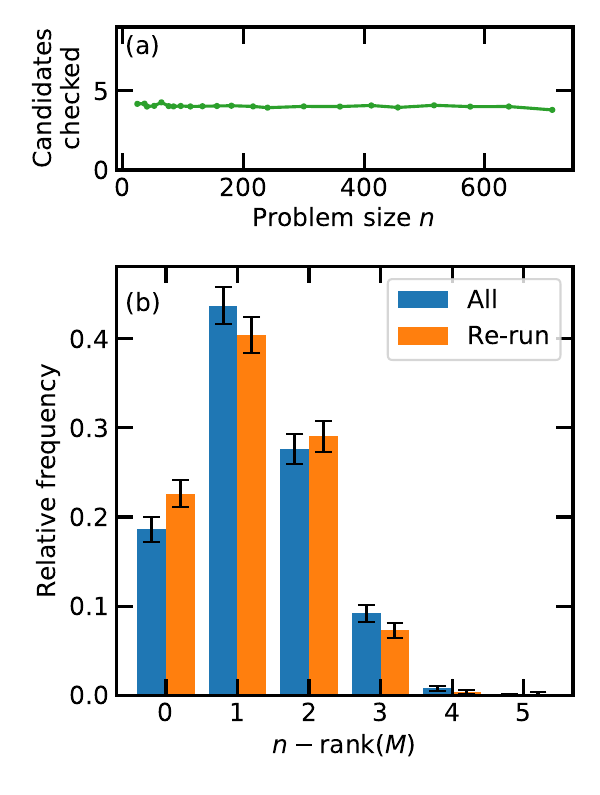}
\par\end{centering}
\caption{
\label{fig:rankM}
\textbf{(a)} The average number of candidate vectors checked before the secret vector $\boldsymbol{s}$ was found, when the algorithm was applied to 1000 unique $X$-programs at each problem size tested.
We observe that the number of vectors to check is qualitatively constant in $n$.
\textbf{(b) }The number of unconstrained degrees of freedom $n-\mathrm{rank}\left(M\right)$ for matrices $M$ generated in step 3 of Algorithm \ref{alg:extract}, for ``good'' choices of $\boldsymbol{d}$ such that $M\boldsymbol{s}=\boldsymbol{1}$.
The rapidly decaying tail qualitatively implies that it is rare for any more than a few degrees of freedom to remain unconstrained.
The blue bars represent the distribution over 1000 unique $X$-programs of size $n=245$.
The algorithm was then re-run on the $X$-programs that had $n-\mathrm{rank}\left(M\right)>4$ to generate the orange bars.}

\end{figure}

Despite the lack of a proof, empirical evidence supports Claim~\ref{claim:runtime} when the algorithm is applied to $X$-programs generated in the manner described in Section~\ref{sec:Background}.
Figure \ref{fig:rankM}(a) shows the average number of candidate keys checked by the algorithm before $\boldsymbol{s}$ is found, as a function of problem size.
The value is constant, demonstrating that the average size of the set $\left\{ \boldsymbol{s}_{i}\right\} $ does not scale with $n$.
Furthermore, the value is small---only about 4.
This implies that $M$ usually has high rank.
In Figure \ref{fig:rankM}(b) we plot explicitly the distribution of the rank of the matrix $M$ over 1000 runs of the algorithm on unique $X$-programs of size $n=245$.
The blue bars (on the left of each pair) show the distribution over all $X$-programs tested, and the sharply decaying tail supports the claim that low-rank $M$ almost never occur. 

A natural next question is whether there is some feature of the $X$-programs in that tail that causes $M$ to be low rank. 
To investigate that question, the algorithm was re-run 100 times on each of the $X$-programs that had $n-\mathrm{rank}\left(M\right)>4$ in the blue distribution.
The orange bars of Figure \ref{fig:rankM}(b) (on the right of each pair) plot the distribution of $n-\mathrm{rank}\left(M\right)$ for that second run.
The similarity of the blue and orange distributions suggests that the rank of $M$ is not correlated between runs; that is, the low rank of $M$ in the first run was not due to any feature of the input $X$-programs.
From a practical perspective, this data suggests that if the rank of $M$ is found to be unacceptably low, the algorithm can simply be re-run with new randomness and the rank of $M$ is likely to be higher the second time.

\subsection{Implementation\label{subsec:Implementation}}

An implementation of Algorithm \ref{alg:extract} in the programming language Julia (along with the code to generate the figures in this manuscript) is available online~\cite{zenodo_code}.
Figure \ref{fig:solvetime} shows the runtime of this implementation for various problem sizes.
Experiments were completed using one thread on an Intel 8268 ``Cascade Lake'' processor.

Note that Figure \ref{fig:solvetime} shows $\mathcal{O}\left(n^{2}\right)$ scaling, rather than $\mathcal{O}\left(n^{3}\right)$ from Claim \ref{claim:runtime}.
This is due to data-level parallelism in the implementation. $\mathbb{Z}_{2}^{n}$ vectors are stored as the bits of 64-bit integers, so operations like vector addition can be performed on 64 elements at once via bitwise operations.
Furthermore, with AVX SIMD CPU instructions, those operations can be applied to multiple 64-bit integers in one CPU cycle.
Thus, for $n$ of order 100, the ostensibly $\mathcal{O}\left(n\right)$ vector inner products and vector sums are performed in constant time, removing one factor of $n$ from the runtime.
The tests in Figure \ref{fig:solvetime} were performed on a CPU with 512 bit vector units.

\section{Discussion\label{sec:Discussion}}

\paragraph{Modifications to the protocol}

A natural question is whether it is possible to modify the original protocol such that this attack is not successful. 
Perhaps $P$ can be engineered such that either 1) it is not possible to generate a large number of vectors that all have a known inner product with $\boldsymbol{s}$, or 2) the rank of the matrix $M$ formed by these generated vectors will never be sufficiently high to allow solution of the linear system.

For 1), our ability to generate many vectors orthogonal to $\boldsymbol{s}$ relies on the fact that the code generated by the hidden submatrix $P_{\boldsymbol{s}}$ has codewords $\boldsymbol{c}$ with $\mathrm{wt}(\boldsymbol{c}) \in \{-1, 0\} \pmod 4$, as shown in the proof of Theorem~\ref{thm:correctprob}.
Unfortunately, this property regarding the weights of the codewords is precisely what gives the quantum sampling algorithm its bias toward generating vectors with $\boldsymbol{x} \cdot \boldsymbol{s} = 0$ (see Fact~\ref{fact:quantum-strategy}).
This fact seems to preclude the possibility of simply removing the special property of the submatrix $P_{\boldsymbol{s}}$ to prevent the attack.

For 2), the main obstacle is that the matrix $P$ \emph{must} have rank $n$ because embedded in it is a code of rank $n$.
The only hope is to somehow engineer the matrix such that linear combinations generated in the specific way described above will not themselves be linearly independent.
It is not at all clear how one would do that, and furthermore, adding structure to the previously-random extra rows of $P$ runs the risk of providing even more information about the secret vector $\boldsymbol{s}$.
Perhaps one could \emph{prove} that the rank of $M$ will be large even for worst-case inputs $P$---this could be an interesting future direction.

\paragraph{Protocols with provable hardness}

The attack described in this paper reiterates the value of building protocols for which passing the test itself, rather than just simulating the quantum device, can be shown to be hard under well-established cryptographic assumptions.
In the past few years, a number of new trapdoor claw-free function based constructions have been proposed for demonstrating quantum computational advantage~\cite{brakerski_cryptographic_2019, brakerski2020simpler, kahanamoku2021classically, alnawakhtha_lattice-based_2022}, as well as some based on other types of cryptography~\cite{yamakawa_verifiable_2022, kalai_quantum_2022}.
Unfortunately, such rigorous results come with a downside, which is an increase in the size and complexity of circuits that must be run on the quantum device.
Exploring simplified protocols that are provably secure is an exciting area for further research.

\paragraph{The \$25 challenge}

When the protocol was first proposed in \cite{shepherd_temporally_2009}, it was accompanied by an internet challenge.
The authors posted a specific instance of the matrix $P$, and offered \$25 to anyone who could send them samples passing the verifier's check.
The secret vector $\boldsymbol{s}$ corresponding to their challenge matrix $P$ is (encoded as a base-64 string):
\begin{verbatim}
BilbHzjYxrOHYH4OlEJFBoXZbps4a54kH8flrRgo/g==
\end{verbatim}
The key was extracted using the implementation of Algorithm~\ref{alg:extract} described in Section~\ref{subsec:Implementation}.

Shepherd and Bremner, the authors of the challenge, have graciously confirmed that this indeed is the correct key.

\paragraph{Summary and outlook}

Here we have described a classical algorithm that passes the interactive quantum test described in \cite{shepherd_temporally_2009}.
We have proven that a single iteration of the algorithm will return the underlying secret vector with probability $\nicefrac{1}{2}$, and empirically shown that it is efficient.
The immediate implication of this result is that the protocol in its original form is no longer effective as a test of quantum computational power.
While it may be possible to reengineer that protocol to thwart this attack, this paper reiterates the value of proving the security of the verification step.
Furthermore, while protocols for quantum advantage with provable classical hardness are valuable in their own right, they can also be used as building blocks for achieving new, more complex cryptographic tasks, like certifiable random number generation, secure remote state preparation, and even the verification of arbitrary quantum computations~\cite{brakerski_cryptographic_2019, alex2019computationallysecure, mahadev_classical_2018}.
As quantum hardware continues to improve and to surpass the abilities of classical machines, quantum cryptographic tools will play an important role in making quantum computation available
as a service.
Establishing the security of these protocols is an important first step.

\paragraph{Acknowledgements}

The author was supported in this work by the National Defense Science and Engineering
Graduate Fellowship (NDSEG).

\bibliographystyle{quantum}
\bibliography{refs}

\end{document}